\newtheorem{theorem}{Theorem}
\newtheorem{definition}[theorem]{Definition}
\newtheorem{lemma}[theorem]{Lemma}
\newtheorem{corollary}[theorem]{Corollary}
\newcommand\bigO{\mathcal{O}} % big O notation
\newcommand\E{\mathbb{E}} % expectation
\newcommand\prob{\mathbb{P}} % probability
\newcommand\ind{\mathbbold{1}}  % indicator function
\newcommand\atransition{P^{\text{active}}} % active transition matrix
\newcommand\ptransition{P^{\text{passive}}} % passive transition matrix
\newcommand\bandit{\theta} % a bandit environment
\newcommand\bandits{\Theta} % a family of bandit environments
\newcommand\trueBandit{\theta^{\star}} % the true bandit environment
\newcommand\truePolicy{\pi^{\star}} % the true competitor
\newcommand\prior{Q} % prior distribution
\newcommand\posterior{Q} % porsterior distribution
\newcommand\sumtt{\sum_{t=1}^T} % sum from 1 to T
\newcommand\sumtl{\sum_{t=1}^L} % sum from 1 to L
\newcommand\sumin{\sum_{i=1}^N} % sum from 1 to N
\newcommand\sumll{\sum_{l=1}^{m}} % sum over episodes
\newcommand\history{\mathcal{H}} % history
\newcommand\eplen{L} % episode length
\newcommand\epnum{m} % the number of episodes
\newcommand\T{\mathcal{T}} % Bellman operator
\newcommand\initDist{\rho} % initial distribution
\begin{document}

\title{Screening for an Infectious Disease \\ as a Problem in Stochastic Control}
%\author{Jakub Marecek%  and friends
%}

\author{
 Jakub Mare\v{c}ek~%\IEEEmembership{Member,~IEEE,}
\thanks{J. Marecek is at the Czech Technical University, Prague, the Czech Republic.
%This work has been supported by  the OP RDE funded project CZ.02.1.01/0.0/0.0/16\_019/0000765 ``Research Center for Informatics''.
}% <-this % stops a space
\thanks{Manuscript received \today.}
}
\maketitle

%\date{\today}

\begin{abstract}
There has been much recent interest in screening populations for an infectious disease. Here, we present a stochastic-control model, wherein the optimum screening policy is provably difficult to find, but wherein Thompson sampling has provably optimal performance guarantees in the form of Bayesian regret.
Thompson sampling seems applicable especially to diseases, for which we do not understand the dynamics well, such as to the super-spreading COVID-19.
\end{abstract}

\section{Introduction}

There has been much recent interest in screening populations for a an infectious disease.
In the case of COVID-19, data from contact-tracing apps  \cite[e.g.]{allen2020population,schneider2020flipping,aleta2020modelling,giordano2020modelling,aleta2020modelling}, esp. \cite{liang2020covid}, suggest that mortality is negatively associated with the number of tests performed in the given community, esp. in low-income countries and countries with lower government-effectiveness scores \cite{liang2020covid,brotherhood2020economic}. 
While the association does not imply causation, once the number of tests required by contact tracing exceeds the capacity for performing tests \cite{aleta2020modelling}, or once the contact tracing becomes futile by other means,
%The spread of COVID-19 and the limited capacity for performing PCR-based tests has shown 
the importance of statistical approaches to the allocation of tests to communities \cite{KRETZSCHMAR2020e452,gray2020covid,Muller2020} and, hypothetically,  \cite{yilmaz2020kemeny}, individuals, becomes clear.

Screening for an infectious disease, especially in a pandemic, has two conflicting goals:
one goal is to stop the spread of the disease
and one goal is to understand the spread of the disease %at the present moment 
as precisely as possible.
The former goal may lead to increasing the intensity of testing in communities, where the disease has spread widely. 
The latter goal may lead to uniform sampling from the population, perhaps using tests of limited accuracy,
as exemplified by Slovakia \cite{HOLT2020}, which has tested the entirety of its population with immunoassays over a weekend.
These two goals are conflicting, but their conflict is well understood in Stochastic Control \cite{cesa2006prediction,gittins2011multi}.

Indeed, Stochastic Control could be seen as a field concerned with the balancing the trade-off between ``exploration'' and ``exploitation'' \cite{gittins2011multi}. In exploration, we aim to learn the stochastic processes \cite{dureau2013capturing} involved. In exploitation, one wishes to utilise the current estimates of the stochastic processes involved to optimize a functional, such as the long-run sum of the persons infected.
Notice that the exploitation and exploration does not necessarily map to the short-term and long-term objectives in disease control: 
long-term disease control requires both exploration and exploitation, and hence the balancing the trade-off. 

In contrast, there seems to be a mismatch between the techniques used for screening for an infectious diseases at the moment and the challenges of COVID-19.
Some of the methods for screening for an infectious diseases \cite[e.g.]{Muller2020} are rooted in traditional compartmental models of Epidemiology, while others are data-driven. 
In Computer Science and Statistics, date-driven models are often based on multi-agent techniques \cite[e.g.]{adam2020special,ferguson2020report}, or graph-theoretic considerations \cite[e.g.]{yilmaz2020kemeny}.
In Epidemiology \cite{anderson1992infectious}, there are very many compartmental models \cite{anderson1992infectious}, raging from the simple SIR model \cite{kermack1927contribution}, which recognises three stages,%: Susceptible (S), Infectious (I), and Recovered, 
to SIDARTHE \cite{giordano2020modelling}, which recognises eight stages of infection. %susceptible (S), infected (I), diagnosed (D), ailing (A), recognized (R), threatened (T), healed (H) and extinct (E). 
It is, however, increasingly recognised \cite{lloyd2005superspreading,adam2020clustering} that such models may be of limited utility in screening for novel diseases, which are super-spreading, allow for reinfections, and may overwhelm the healthcare system, for several reasons:
First, it is non-trivial to identify parameters of the compartmental models, until a substantial number of cases of the novel disease is documented in a particular intervention regime.
Second, and more importantly, the compartmental models underestimate the variance of the associated stochastic processes in so-called super-spreading diseases.\footnote{
Statements such as ``offspring distribution of COVID-19 is highly overdispersed'' with $k=0.1$ \cite{endo2020estimating} suggests that “10\% of cases lead to 80\% of the spread” \cite{endo2020estimating}, which is hard to model in the compartmental models.}
Third, many widely studied compartmental models (SI, SIR, SEIR, \ldots, SIDARTHE) do not model reinfections\footnote{
Reinfections are also well documented \cite{duggan2020case,to2020serum}, although their numbers  \cite{ota2020will,gousseff2020clinical,roy2020covid,ccenesiz2020covid} and impact \cite{ccenesiz2020covid} are still unclear.}, not only because of the 
prevalence of diseases with ``immunizing infections,'' but also for technical reasons \cite{keeling2016systematic}.
%and to ease the interpretation of the results  \cite{ccenesiz2020covid}.
Fourth, compartmental models that allow for the study of the impact of a test assume that individuals can be effectively isolated, once tested positive, while the overloaded healthcare systems may not be able to prevent further infections by those who have tested positive.
While one can and should make forecasts based on the current models \cite{saad2020immune}, one should also realise that the dynamics are uncertain \cite{saad2020immune} and that 
that modelling the stochastic aspects better \cite{dureau2013capturing} may be beneficial \cite{ting2020digital}.  
%\mycomment{MK:How these models are relevant to the problems discussed above, of stopping and gaining understanding.}

Based on a long history of work in Stochastic Control \cite{cesa2006prediction,gittins2011multi},
we present several insights into monitoring the spread of diseases, for which we do not understand the dynamics well, such as the super-spreading COVID-19. 
Overall, our aims are three-fold:
\begin{itemize}
    \item to remove as many assumptions from the screening for an infectious disease as possible, 
    \item to study the computational complexity \cite{arora2009computational} of allocating the budget of tests \cite{emanuel2020fair} independent of any conjectures (e.g., P$\stackrel{?}{=}$NP \cite{arora2009computational}), and
    \item to guarantee optimality of practical algorithms for the same problem.
\end{itemize} 

\section{Our Approach}

\paragraph*{Stochastic Models}
Our first suggestion is to consider the stochastic aspects of the problem explicitly, starting with the fact that the tests are imperfect.\footnote{The probability of detecting disease conditional on the person tested being infected is less than one.
This is true for chest CT and RT-PCR \cite{ai2020correlation} and immunoassays.
Likewise, the probability of detecting disease conditional on the person tested not being infected is larger than zero \cite{ai2020correlation}. 
The probability of a person passing the infection, conditional on them testing positive, is still very much less than one for superspreading \cite{lloyd2005superspreading} diseases.}
%\mycomment{That never happens though.} 
Consider the hypothetical situation, where we performed multiple low-accuracy tests of a single person, or perhaps a sequence of tests of increasing accuracy. We should like to consider both the outcomes of the tests for that person, as in their mean, but also some measure of variance of the outcomes for that person, in deciding whether to test further.\footnote{
A classical policy \cite{cesa2006prediction} considers the so-called upper confidence bound (UCB1) based on Hoeffding’s Inequality \cite{cesa2006prediction}. Following $n$ tests in aggregate, out of which $n_i$ tests have been performed on individual $i$ with mean outcome $\mu_i$, individual $i$ will receive an ``index'' $\mu_i + \sqrt{\frac{2\ln{n}}{n_i}}$. Each day,  individuals with the highest indices are chosen for a test, up to the capacity. 
An alternative policy, known as Thompson sampling \cite{thompson1933likelihood}, selects the individual according to the probability that it is optimal, considering some prior.
} 
If at some point, everyone quarantined perfectly, and there were an unlimited capacity to perform the tests, the screening would be reduced to the so-called multi-armed bandit problem (MAB) \cite{gittins2011multi} in Stochastic Control \cite{cesa2006prediction,gittins2011multi}.
(See the Supplementary material for a definition.) If the capacity to perform tests were limited, this would correspond to the combinatorial variant \cite{CESABIANCHI2012,combes2015combinatorial} of the MAB, which is substantially harder \cite{Merlis2020}. 
If, however, the disease spreads, these models are no longer useful and one has to consider the so-called restless bandits \cite{whittle1988restless,gittins2011multi}. 

In particular, in modelling the spread of the disease as restless bandits, the stochastic process could be the positivity rate in a particular region or community, for example with the sampling frequency of a day. One could have several stochastic processes, one for each community.
There are no assumptions on the evolution of the stochastic process, including no assumptions of the independently identically distributed random variables. 
%This stochastic process is estimated by the actual testing. 

%AI and deep learning can enhance the detection and diagnosis of COVID-19 \cite{ting2020digital}

\paragraph*{Computational Complexity}
%\mycomment{Very unclear this paragraph.} 
Consider problem of whom to test given a budget of tests \cite{emanuel2020fair}.
For example in COVID-19, the reported symptom of loss of taste and/or smell was most strongly associated with a positive test result \cite{allen2020population,pierron2020smell}, so in the short term, it may be beneficial to test symptomatic patients. 
It is clear, however, that this is suboptimal in the long run, where one also needs to test asymptomatic individuals \cite{Muller2020} in communities where no (or few) tests have been positive, so far. 
Our second insight is that in the restless-bandit model outlined in the previous paragraph, the problem of whom to test given a budget of tests \cite{emanuel2020fair}, is computationally hard.
In the language of computational complexity, its approximation to any non-trivial factor is complete for polynomial-space Turing machines \cite{Papadimitriou1999}.
This suggests that independent of any unproven conjectures, the problem is as hard as any computation that can be performed using a polynomial amount of space on the Turing machine in \emph{any} amount of time. 
This is based on the well-known complexity results for the restless multi-armed bandit problem \cite{whittle1988restless,weber1990index,gittins2011multi}, under very modest assumptions, as we detail in the Supplementary Material.
 
\paragraph*{Optimality of Thompson sampling}
Our third insight is that there are  (asymptotically) optimal algorithms for the screening problem, despite the complexity results.
Moreover, these optimal algorithms can be as simple as Thompson sampling \cite{thompson1933likelihood,russo2018tutorial}, which is a natural approach, 
wherein one draws a random sample $\bandit_l$ from a prior, %$\posterior_l$,
applies actions
that maximize expected reward considering the sample $\bandit_l$ drawn, observes the outcome, and updates the prior using the observation of the outcome. This is repeated, possibly daily, as suggested in Algorithm~\ref{alg:tsrmab} in the Supplementary Material. %\mycomment{Re:Algorithm 1, Isn't state in that Neurips paper 0/1 for each arm? Not perfect for communities.}

While the second and third insights may seem contradictory, especially considering that -- until recently -- the best guarantees  \cite{ortner2012regret} for the restless multi-armed bandit problem suggested an $\tilde{\bigO}(\sqrt T)$ bound on the average of the distance between the optimal action and the action chosen by the algorithm (regret) by time $T$ for an algorithm that is intractable in general. 
More recent work \cite{jung2019thompson,jung2019thompson2}  allows for the $\tilde{\bigO}(\sqrt T)$ bound on the regret using Thompson sampling, the classic algorithm \cite{thompson1933likelihood,russo2018tutorial},
both in the special case of binary rewards, where in the testing of individuals there are binary \cite{jung2019thompson} (either an infected person tests positive  with reward 1, or we do not receive any reward), and in a more general episodic case \cite{jung2019thompson2}. 
These are applicable to the two variants of the problem discussed above. 

This can be seen as a complement to the more traditional model-based optimal control \cite{tsay2020modeling,piguillem2020optimal,kruse2020optimal,bin2020fast,kohler2020robust,ma2020optimal,acemoglu2020optimal,birge2020controlling} for the introduction of the restrictions. 

\paragraph*{Making it Practical}
In order to make the optimal algorithms practically relevant, one needs to choose the prior wisely.
% , and to consider more elaborate update rules
The specifics depend, obviously, on the nature of the data available. 
CMU Delphi lab \footnote{
\url{https://cmu-delphi.github.io/delphi-epidata/api/covidcast_signals.html}}, for instance, makes 10 different graphs available, outside of any data from any test-and-trace application.
On such a dataset, for instance, Kemeny-based priors \cite{yilmaz2020kemeny} or highest-degree-first \cite[e.g.]{wilder2020clinical} priors may work well, as documented by clinical trials in other diseases \cite[e.g.]{wilder2020clinical}.
One may also consider extensions drawing on work in reinforcement learning \cite{avrachenkov2020whittle}.
%, several update rules have been considered.  

\section{Conclusions}

There has been a substantial amount of proposal as to how to screen for COVID-19, under a variety of strong assumptions. %\mycomment{Couldn't understand what exactly the approach is.}
We present a very natural approach to removing the assumptions.
This seems particularly useful in diseases, whose dynamics are poorly understood, but may be superspreading and allow for reinfections, e.g., COVID-19.
Within this model, well-known policies from stochastic control come with strong performance guarantees (Bayesian regret bounds) relative to \emph{the best possible} deterministic policies \emph{in hindsight}. Such 
an optimum 
deterministic policy, e.g., based on contact tracing,   are, however,  unknown  \emph{a priori}.
Our guarantees are optimal up to a logarithmic factor.

\clearpage
\bibliography{refs}

% Generated by IEEEtranN.bst, version: 1.14 (2015/08/26)
\begin{thebibliography}{58}
\providecommand{\natexlab}[1]{#1}
\providecommand{\url}[1]{#1}
\csname url@samestyle\endcsname
\providecommand{\newblock}{\relax}
\providecommand{\bibinfo}[2]{#2}
\providecommand{\BIBentrySTDinterwordspacing}{\spaceskip=0pt\relax}
\providecommand{\BIBentryALTinterwordstretchfactor}{4}
\providecommand{\BIBentryALTinterwordspacing}{\spaceskip=\fontdimen2\font plus
\BIBentryALTinterwordstretchfactor\fontdimen3\font minus
  \fontdimen4\font\relax}
\providecommand{\BIBforeignlanguage}[2]{{%
\expandafter\ifx\csname l@#1\endcsname\relax
\typeout{** WARNING: IEEEtranN.bst: No hyphenation pattern has been}%
\typeout{** loaded for the language `#1'. Using the pattern for}%
\typeout{** the default language instead.}%
\else
\language=\csname l@#1\endcsname
\fi
#2}}
\providecommand{\BIBdecl}{\relax}
\BIBdecl

\bibitem[Allen et~al.(2020)Allen, Altae-Tran, Briggs, Jin, McGee, Shi,
  Raghavan, Kamariza, Nova, Pereta, et~al.]{allen2020population}
W.~E. Allen, H.~Altae-Tran, J.~Briggs, X.~Jin, G.~McGee, A.~Shi, R.~Raghavan,
  M.~Kamariza, N.~Nova, A.~Pereta \emph{et~al.}, ``Population-scale
  longitudinal mapping of covid-19 symptoms, behaviour and testing,''
  \emph{Nature Human Behaviour}, vol.~4, no.~9, pp. 972--982, 2020.

\bibitem[Schneider and Pollack(2020)]{schneider2020flipping}
J.~A. Schneider and H.~A. Pollack, ``Flipping the script for coronavirus
  disease 2019 contact tracing,'' in \emph{JAMA Health Forum}, vol.~1,
  no.~9.\hskip 1em plus 0.5em minus 0.4em\relax American Medical Association,
  2020, pp. e201\,129--e201\,129.

\bibitem[Aleta et~al.(2020)Aleta, Mart{\'\i}n-Corral, y~Piontti, Ajelli,
  Litvinova, Chinazzi, Dean, Halloran, Longini~Jr, Merler,
  et~al.]{aleta2020modelling}
A.~Aleta, D.~Mart{\'\i}n-Corral, A.~P. y~Piontti, M.~Ajelli, M.~Litvinova,
  M.~Chinazzi, N.~E. Dean, M.~E. Halloran, I.~M. Longini~Jr, S.~Merler
  \emph{et~al.}, ``Modelling the impact of testing, contact tracing and
  household quarantine on second waves of covid-19,'' \emph{Nature Human
  Behaviour}, vol.~4, no.~9, pp. 964--971, 2020.

\bibitem[Giordano et~al.(2020)Giordano, Blanchini, Bruno, Colaneri, Di~Filippo,
  Di~Matteo, and Colaneri]{giordano2020modelling}
G.~Giordano, F.~Blanchini, R.~Bruno, P.~Colaneri, A.~Di~Filippo, A.~Di~Matteo,
  and M.~Colaneri, ``Modelling the covid-19 epidemic and implementation of
  population-wide interventions in italy,'' \emph{Nature Medicine}, pp. 1--6,
  2020.

\bibitem[Liang et~al.(2020)Liang, Tseng, Ho, and Wu]{liang2020covid}
L.-L. Liang, C.-H. Tseng, H.~J. Ho, and C.-Y. Wu, ``Covid-19 mortality is
  negatively associated with test number and government effectiveness,''
  \emph{Scientific reports}, vol.~10, no.~1, pp. 1--7, 2020.

\bibitem[Brotherhood et~al.(2020)Brotherhood, Kircher, Santos, and
  Tertilt]{brotherhood2020economic}
L.~Brotherhood, P.~Kircher, C.~Santos, and M.~Tertilt, ``An economic model of
  the covid-19 epidemic: The importance of testing and age-specific policies,''
  2020.

\bibitem[Kretzschmar et~al.(2020)Kretzschmar, Rozhnova, Bootsma, {van Boven},
  {van de Wijgert}, and Bonten]{KRETZSCHMAR2020e452}
M.~E. Kretzschmar, G.~Rozhnova, M.~C.~J. Bootsma, M.~{van Boven}, J.~H. H.~M.
  {van de Wijgert}, and M.~J.~M. Bonten, ``Impact of delays on effectiveness of
  contact tracing strategies for covid-19: a modelling study,'' \emph{The
  Lancet Public Health}, vol.~5, no.~8, pp. e452 -- e459, 2020.

\bibitem[Gray et~al.(2020)Gray, Anyane-Yeboa, Balzora, Issaka, and
  May]{gray2020covid}
D.~M. Gray, A.~Anyane-Yeboa, S.~Balzora, R.~B. Issaka, and F.~P. May,
  ``Covid-19 and the other pandemic: populations made vulnerable by systemic
  inequity,'' \emph{Nature Reviews Gastroenterology \& Hepatology}, vol.~17,
  no.~9, pp. 520--522, 2020.

\bibitem[Mueller et~al.(2020)Mueller, Derlet, Mudry, and Aeppli]{Muller2020}
M.~Mueller, P.~M. Derlet, C.~Mudry, and G.~Aeppli, ``Testing of asymptomatic
  individuals for fast feedback-control of covid-19 pandemics,'' \emph{Physical
  Biology}, 2020.

\bibitem[Yilmaz et~al.(2020)Yilmaz, Dudkina, Bin, Crisostomi, Ferraro,
  Murray-Smith, Parisini, Stone, and Shorten]{yilmaz2020kemeny}
S.~Yilmaz, E.~Dudkina, M.~Bin, E.~Crisostomi, P.~Ferraro, R.~Murray-Smith,
  T.~Parisini, L.~Stone, and R.~Shorten, ``Kemeny-based testing for covid-19,''
  \emph{arXiv e-prints}, pp. arXiv--2006, 2020.

\bibitem[Holt(2020)]{HOLT2020}
E.~Holt, ``Slovakia to test all adults for sars-cov-2,'' \emph{The Lancet},
  vol. 396, no. 10260, pp. 1386 -- 1387, 2020.

\bibitem[Cesa-Bianchi and Lugosi(2006)]{cesa2006prediction}
N.~Cesa-Bianchi and G.~Lugosi, \emph{Prediction, learning, and games}.\hskip
  1em plus 0.5em minus 0.4em\relax Cambridge university press, 2006.

\bibitem[Gittins et~al.(2011)Gittins, Glazebrook, and Weber]{gittins2011multi}
J.~Gittins, K.~Glazebrook, and R.~Weber, \emph{Multi-armed bandit allocation
  indices}.\hskip 1em plus 0.5em minus 0.4em\relax John Wiley \& Sons, 2011.

\bibitem[Dureau et~al.(2013)Dureau, Kalogeropoulos, and
  Baguelin]{dureau2013capturing}
J.~Dureau, K.~Kalogeropoulos, and M.~Baguelin, ``Capturing the time-varying
  drivers of an epidemic using stochastic dynamical systems,''
  \emph{Biostatistics}, vol.~14, no.~3, pp. 541--555, 2013.

\bibitem[Adam(2020)]{adam2020special}
D.~Adam, ``Special report: The simulations driving the world's response to
  covid-19.'' \emph{Nature}, vol. 580, no. 7803, p. 316, 2020.

\bibitem[Ferguson et~al.(2020)Ferguson, Laydon, Nedjati~Gilani, Imai, Ainslie,
  Baguelin, Bhatia, Boonyasiri, Cucunuba~Perez, Cuomo-Dannenburg,
  et~al.]{ferguson2020report}
N.~Ferguson, D.~Laydon, G.~Nedjati~Gilani, N.~Imai, K.~Ainslie, M.~Baguelin,
  S.~Bhatia, A.~Boonyasiri, Z.~Cucunuba~Perez, G.~Cuomo-Dannenburg
  \emph{et~al.}, ``Report 9: Impact of non-pharmaceutical interventions (npis)
  to reduce covid19 mortality and healthcare demand,'' 2020.

\bibitem[Anderson et~al.(1992)Anderson, Anderson, and
  May]{anderson1992infectious}
R.~M. Anderson, B.~Anderson, and R.~M. May, \emph{Infectious diseases of
  humans: dynamics and control}.\hskip 1em plus 0.5em minus 0.4em\relax Oxford
  university press, 1992.

\bibitem[Kermack and McKendrick(1927)]{kermack1927contribution}
W.~O. Kermack and A.~G. McKendrick, ``A contribution to the mathematical theory
  of epidemics,'' \emph{Proceedings of the {R}oyal society of {L}ondon. Series
  A, Containing papers of a mathematical and physical character}, vol. 115, no.
  772, pp. 700--721, 1927.

\bibitem[Lloyd-Smith et~al.(2005)Lloyd-Smith, Schreiber, Kopp, and
  Getz]{lloyd2005superspreading}
J.~O. Lloyd-Smith, S.~J. Schreiber, P.~E. Kopp, and W.~M. Getz,
  ``Superspreading and the effect of individual variation on disease
  emergence,'' \emph{Nature}, vol. 438, no. 7066, pp. 355--359, 2005.

\bibitem[Adam et~al.(2020)Adam, Wu, Wong, Lau, Tsang, Cauchemez, Leung, and
  Cowling]{adam2020clustering}
D.~Adam, P.~Wu, J.~Wong, E.~Lau, T.~Tsang, S.~Cauchemez, G.~Leung, and
  B.~Cowling, ``Clustering and superspreading potential of severe acute
  respiratory syndrome coronavirus 2 (sars-cov-2) infections in hong kong,''
  2020.

\bibitem[Endo et~al.(2020)Endo, Abbott, Kucharski, Funk,
  et~al.]{endo2020estimating}
A.~Endo, S.~Abbott, A.~J. Kucharski, S.~Funk \emph{et~al.}, ``Estimating the
  overdispersion in covid-19 transmission using outbreak sizes outside china,''
  \emph{Wellcome Open Research}, vol.~5, no.~67, p.~67, 2020.

\bibitem[Duggan et~al.(2020)Duggan, Ludy, Shannon, Reisner, and
  Wilcox]{duggan2020case}
N.~M. Duggan, S.~M. Ludy, B.~C. Shannon, A.~T. Reisner, and S.~R. Wilcox, ``A
  case report of possible novel coronavirus 2019 reinfection,'' \emph{The
  American journal of emergency medicine}, 2020.

\bibitem[To et~al.(2020)To, Hung, Chan, Yuan, To, Tsang, Cheng, Chen, Kok, and
  Yuen]{to2020serum}
K.~K.-W. To, I.~F.-N. Hung, K.-H. Chan, S.~Yuan, W.-K. To, D.~N.-C. Tsang,
  V.~C.-C. Cheng, Z.~Chen, K.-H. Kok, and K.-Y. Yuen, ``Serum antibody profile
  of a patient with covid-19 reinfection,'' \emph{Clinical Infectious
  Diseases}, 2020.

\bibitem[Ota(2020)]{ota2020will}
M.~Ota, ``Will we see protection or reinfection in covid-19?'' 2020.

\bibitem[Gousseff et~al.(2020)Gousseff, Penot, Gallay, Batisse, Benech,
  Bouiller, Collarino, Conrad, Slama, Joseph, et~al.]{gousseff2020clinical}
M.~Gousseff, P.~Penot, L.~Gallay, D.~Batisse, N.~Benech, K.~Bouiller,
  R.~Collarino, A.~Conrad, D.~Slama, C.~Joseph \emph{et~al.}, ``Clinical
  recurrences of covid-19 symptoms after recovery: viral relapse, reinfection
  or inflammatory rebound?'' \emph{Journal of Infection}, 2020.

\bibitem[Roy(2020)]{roy2020covid}
S.~Roy, ``Covid-19 reinfection: Myth or truth?'' \emph{SN Comprehensive
  Clinical Medicine}, pp. 1--4, 2020.

\bibitem[{\c{C}}enesiz and Guimar{\~a}es(2020)]{ccenesiz2020covid}
M.~A. {\c{C}}enesiz and L.~Guimar{\~a}es, ``Covid-19: What if immunity wanes?''
  \emph{arXiv preprint arXiv:2008.03283}, 2020.

\bibitem[Keeling et~al.(2016)Keeling, House, Cooper, and
  Pellis]{keeling2016systematic}
M.~J. Keeling, T.~House, A.~J. Cooper, and L.~Pellis, ``Systematic
  approximations to susceptible-infectious-susceptible dynamics on networks,''
  \emph{PLoS computational biology}, vol.~12, no.~12, p. e1005296, 2016.

\bibitem[Saad-Roy et~al.(2020)Saad-Roy, Wagner, Baker, Morris, Farrar, Graham,
  Levin, Mina, Metcalf, and Grenfell]{saad2020immune}
C.~M. Saad-Roy, C.~E. Wagner, R.~E. Baker, S.~E. Morris, J.~Farrar, A.~L.
  Graham, S.~A. Levin, M.~J. Mina, C.~J.~E. Metcalf, and B.~T. Grenfell,
  ``Immune life history, vaccination, and the dynamics of sars-cov-2 over the
  next 5 years,'' \emph{Science}, 2020.

\bibitem[Ting et~al.(2020)Ting, Carin, Dzau, and Wong]{ting2020digital}
D.~S.~W. Ting, L.~Carin, V.~Dzau, and T.~Y. Wong, ``Digital technology and
  covid-19,'' \emph{Nature medicine}, vol.~26, no.~4, pp. 459--461, 2020.

\bibitem[Arora and Barak(2009)]{arora2009computational}
S.~Arora and B.~Barak, \emph{Computational complexity: a modern
  approach}.\hskip 1em plus 0.5em minus 0.4em\relax Cambridge University Press,
  2009.

\bibitem[Emanuel et~al.(2020)Emanuel, Persad, Upshur, Thome, Parker, Glickman,
  Zhang, Boyle, Smith, and Phillips]{emanuel2020fair}
E.~J. Emanuel, G.~Persad, R.~Upshur, B.~Thome, M.~Parker, A.~Glickman,
  C.~Zhang, C.~Boyle, M.~Smith, and J.~P. Phillips, ``Fair allocation of scarce
  medical resources in the time of covid-19,'' 2020.

\bibitem[Ai et~al.(2020)Ai, Yang, Hou, Zhan, Chen, Lv, Tao, Sun, and
  Xia]{ai2020correlation}
T.~Ai, Z.~Yang, H.~Hou, C.~Zhan, C.~Chen, W.~Lv, Q.~Tao, Z.~Sun, and L.~Xia,
  ``Correlation of chest ct and rt-pcr testing in coronavirus disease 2019
  (covid-19) in china: a report of 1014 cases,'' \emph{Radiology}, p. 200642,
  2020.

\bibitem[Thompson(1933)]{thompson1933likelihood}
W.~R. Thompson, ``On the likelihood that one unknown probability exceeds
  another in view of the evidence of two samples,'' \emph{Biometrika}, vol.~25,
  no. 3/4, pp. 285--294, 1933.

\bibitem[Cesa-Bianchi and Lugosi(2012)]{CESABIANCHI2012}
N.~Cesa-Bianchi and G.~Lugosi, ``Combinatorial bandits,'' \emph{Journal of
  Computer and System Sciences}, vol.~78, no.~5, pp. 1404 -- 1422, 2012, jCSS
  Special Issue: Cloud Computing 2011.

\bibitem[Combes et~al.(2015)Combes, Shahi, Proutiere,
  et~al.]{combes2015combinatorial}
R.~Combes, M.~S. T.~M. Shahi, A.~Proutiere \emph{et~al.}, ``Combinatorial
  bandits revisited,'' in \emph{Advances in Neural Information Processing
  Systems}, 2015, pp. 2116--2124.

\bibitem[Merlis and Mannor(2020)]{Merlis2020}
N.~Merlis and S.~Mannor, ``Tight lower bounds for combinatorial multi-armed
  bandits,'' ser. Proceedings of Machine Learning Research, J.~Abernethy and
  S.~Agarwal, Eds., vol. 125.\hskip 1em plus 0.5em minus 0.4em\relax PMLR,
  09--12 Jul 2020, pp. 2830--2857.

\bibitem[Whittle(1988)]{whittle1988restless}
P.~Whittle, ``Restless bandits: Activity allocation in a changing world,''
  \emph{Journal of applied probability}, pp. 287--298, 1988.

\bibitem[Pierron et~al.(2020)Pierron, Pereda-Loth, Mantel, Moranges, Bignon,
  Alva, Kabous, Heiske, Pacalon, David, et~al.]{pierron2020smell}
D.~Pierron, V.~Pereda-Loth, M.~Mantel, M.~Moranges, E.~Bignon, O.~Alva,
  J.~Kabous, M.~Heiske, J.~Pacalon, R.~David \emph{et~al.}, ``Smell and taste
  changes are early indicators of the covid-19 pandemic and political decision
  effectiveness,'' \emph{Nature Communications}, vol.~11, no.~1, pp. 1--8,
  2020.

\bibitem[Papadimitriou and Tsitsiklis(1999)]{Papadimitriou1999}
C.~H. Papadimitriou and J.~N. Tsitsiklis, ``The complexity of optimal queuing
  network control,'' \emph{Mathematics of Operations Research}, pp. 293--305,
  1999.

\bibitem[Weber and Weiss(1990)]{weber1990index}
R.~R. Weber and G.~Weiss, ``On an index policy for restless bandits,''
  \emph{Journal of applied probability}, pp. 637--648, 1990.

\bibitem[Russo et~al.(2018)Russo, Van~Roy, Kazerouni, Osband, and
  Wen]{russo2018tutorial}
D.~Russo, B.~Van~Roy, A.~Kazerouni, I.~Osband, and Z.~Wen, \emph{A Tutorial on
  Thompson Sampling}, ser. Foundations and Trends in Machine Learning
  Series.\hskip 1em plus 0.5em minus 0.4em\relax Now Publishers, 2018.

\bibitem[Ortner and Ryabko(2012)]{ortner2012regret}
R.~Ortner and D.~Ryabko, ``Online regret bounds for undiscounted continuous
  reinforcement learning,'' in \emph{Advances in Neural Information Processing
  Systems 25}, F.~Pereira, C.~J.~C. Burges, L.~Bottou, and K.~Q. Weinberger,
  Eds.\hskip 1em plus 0.5em minus 0.4em\relax Curran Associates, Inc., 2012,
  pp. 1763--1771.

\bibitem[Jung and Tewari(2019)]{jung2019thompson}
Y.~H. Jung and A.~Tewari, ``Regret bounds for thompson sampling in episodic
  restless bandit problems,'' in \emph{Advances in Neural Information
  Processing Systems 32}, H.~Wallach, H.~Larochelle, A.~Beygelzimer,
  F.~d'Alch\'{e} Buc, E.~Fox, and R.~Garnett, Eds.\hskip 1em plus 0.5em minus
  0.4em\relax Curran Associates, Inc., 2019, pp. 9007--9016.

\bibitem[Jung et~al.(2019)Jung, Abeille, and Tewari]{jung2019thompson2}
Y.~H. Jung, M.~Abeille, and A.~Tewari, ``Thompson sampling in non-episodic
  restless bandits,'' \emph{arXiv preprint arXiv:1910.05654}, 2019.

\bibitem[Tsay et~al.(2020)Tsay, Lejarza, Stadtherr, and
  Baldea]{tsay2020modeling}
C.~Tsay, F.~Lejarza, M.~A. Stadtherr, and M.~Baldea, ``Modeling, state
  estimation, and optimal control for the us covid-19 outbreak,'' \emph{arXiv
  preprint arXiv:2004.06291}, 2020.

\bibitem[Piguillem and Shi(2020)]{piguillem2020optimal}
F.~Piguillem and L.~Shi, ``Optimal covid-19 quarantine and testing policies,''
  2020.

\bibitem[Kruse and Strack(2020)]{kruse2020optimal}
T.~Kruse and P.~Strack, ``Optimal control of an epidemic through social
  distancing,'' 2020.

\bibitem[Bin et~al.(2020)Bin, Cheung, Crisostomi, Ferraro, Myant, Parisini, and
  Shorten]{bin2020fast}
M.~Bin, P.~Cheung, E.~Crisostomi, P.~Ferraro, C.~Myant, T.~Parisini, and
  R.~Shorten, ``On fast multi-shot epidemic interventions for post lock-down
  mitigation: Implications for simple covid-19 models,'' \emph{arXiv preprint
  arXiv:2003.09930}, 2020.

\bibitem[K{\"o}hler et~al.(2020)K{\"o}hler, Schwenkel, Koch, Berberich, Pauli,
  and Allg{\"o}wer]{kohler2020robust}
J.~K{\"o}hler, L.~Schwenkel, A.~Koch, J.~Berberich, P.~Pauli, and
  F.~Allg{\"o}wer, ``Robust and optimal predictive control of the covid-19
  outbreak,'' \emph{arXiv preprint arXiv:2005.03580}, 2020.

\bibitem[Ma et~al.(2020)Ma, Liu, and Olshevsky]{ma2020optimal}
Q.~Ma, Y.-Y. Liu, and A.~Olshevsky, ``Optimal lockdown for pandemic
  stabilization,'' 2020.

\bibitem[Acemoglu et~al.(2020)Acemoglu, Chernozhukov, Werning, and
  Whinston]{acemoglu2020optimal}
D.~Acemoglu, V.~Chernozhukov, I.~Werning, and M.~D. Whinston, ``Optimal
  targeted lockdowns in a multi-group sir model,'' \emph{NBER Working Paper},
  vol. 27102, 2020.

\bibitem[Birge et~al.(2020)Birge, Candogan, and Feng]{birge2020controlling}
J.~R. Birge, O.~Candogan, and Y.~Feng, ``Controlling epidemic spread: Reducing
  economic losses with targeted closures,'' \emph{University of Chicago, Becker
  Friedman Institute for Economics Working Paper}, no. 2020-57, 2020.

\bibitem[Wilder et~al.(2020)Wilder, Onasch-Vera, Diguiseppi, Petering, Hill,
  Yadav, Rice, and Tambe]{wilder2020clinical}
B.~Wilder, L.~Onasch-Vera, G.~Diguiseppi, R.~Petering, C.~Hill, A.~Yadav,
  E.~Rice, and M.~Tambe, ``Clinical trial of an ai-augmented intervention for
  hiv prevention in youth experiencing homelessness,'' \emph{arXiv preprint
  arXiv:2009.09559}, 2020.

\bibitem[Avrachenkov and Borkar(2020)]{avrachenkov2020whittle}
K.~Avrachenkov and V.~S. Borkar, ``Whittle index based q-learning for restless
  bandits with average reward,'' \emph{arXiv preprint arXiv:2004.14427}, 2020.

\bibitem[Bertsimas and Ni{\~n}o-Mora(2000)]{bertsimas2000restless}
D.~Bertsimas and J.~Ni{\~n}o-Mora, ``Restless bandits, linear programming
  relaxations, and a primal-dual index heuristic,'' \emph{Operations Research},
  vol.~48, no.~1, pp. 80--90, 2000.

\bibitem[Besbes et~al.(2014)Besbes, Gur, and Zeevi]{besbes2014stochastic}
O.~Besbes, Y.~Gur, and A.~Zeevi, ``Stochastic multi-armed-bandit problem with
  non-stationary rewards,'' in \emph{Advances in neural information processing
  systems}, 2014, pp. 199--207.

\bibitem[Guha et~al.(2010)Guha, Munagala, and Shi]{guha2010approximation}
S.~Guha, K.~Munagala, and P.~Shi, ``Approximation algorithms for restless
  bandit problems,'' \emph{Journal of the ACM (JACM)}, vol.~58, no.~1, pp.
  1--50, 2010.

\end{thebibliography}

%if everyone quarantined perfectly, there were an unlimited capacity to perform the tests, but tests had limited success rate, screening would be reduced to the so-called multi-armed bandit problem (MAB) \cite{gittins2011multi} in Applied Probability.
%If the capacity to perform tests were limited, this would correspond to the combinatorial MAB, which is substantially harder. 
%If, however, the quarantine is eases and the disease spreads, 

% The reason is that the problem is now one of guessing a policy and testing that it indeed leads to a periodic behavior with the right performance (in terms of weighted throughput). Guessing such a policy, and evaluating its weighted throughput, can be done in nondeterministic polynomial space—which is well known to be the same as deterministic polynomial space. 

\clearpage
\begin{appendix}[Supplementary Material]

In Section \ref{app1} of the Supplementary Material, we introduce the stochastic models involved, following \cite{Papadimitriou1999}.
In Section \ref{appb}, we summarize the guarantees for Thompson sampling, following \cite{jung2019thompson}. 

\subsection{Definitions and Related Work}
\label{app1}

For three decades, one of the best studied problems in applied probability and stochastic analysis has been the restless multi-armed bandit problem, \cite{whittle1988restless,weber1990index,gittins2011multi}.
Formally, in the restless bandits problem, we are given $n$ Markov chains (bandits) $X_i(0), i = 1, \ldots , n, f = 0, 1, .. $, that evolve on a common finite state space $S = {1, .. . , M}$.
We are also given the initial state of each chain. At each time $t$, bandit $i(t)$ is chosen. For $i = i(t), X(t + 1)$ is determined by a transition  matrix $P$.
For every $i 
\not= i(t)$, $X(t + 1)$ is determined by some other transition matrix $Q$. At each time step, we incur a cost $C(t) = c(X_{i(t)}) + \sum_{i \not i(t)} d(X_i(t))$ for some rational-valued functions $c$ and $d$ defined on the state space $S$. Given states of the different bandits,  policy $\pi : S^n \to {1, ... , n}$ decides which bandit should be played next; that is, $i(t) = \pi (X_1(f), \ldots , X_n(t)).$
Its average expected cost is defined as
$$
\lim \sup_{t \to \infty} \frac{1}{T} \sum_{t = 1}^{T}  \mathbb{E}[C(t)],
$$
and we are interested in finding a policy minimizing the average expected cost.
The multi-armed bandit problem is a special case of restless bandits, in which bandits that are not played do not change their state and do not incur any cost, i.e., we have
$Q$ equal to the identity matrix and $d = 0$. %Restless bandits is a difficult problem case where the transition probability matrices $P$, $Q$ correspond to deterministic transition rules, with one transition rule applying to all the bandits that are not played and another applying to the one which is played.
It is well known that:

\begin{theorem}[Theorem 4 in \cite{Papadimitriou1999}]
Restless bandits are PSPACE-hard.
\end{theorem}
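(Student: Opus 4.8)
The plan is to prove hardness by a polynomial-time many--one reduction from a canonical PSPACE-complete problem, in the spirit of \cite{Papadimitriou1999}; the cleanest choice is the acceptance problem for a Turing machine $M$ that, on input $x$, is promised to halt while visiting at most $s = p(|x|)$ tape cells (one may equally reduce from the quantified Boolean formula problem, or from non-emptiness of intersections of deterministic finite automata). First I would fix such an $M$, $x$, $s$ and construct, in time polynomial in $|M|+|x|$, a restless-bandit instance together with a rational threshold $\beta$, so that the optimal average expected cost is at most $\beta$ exactly when $M$ accepts $x$.

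For the construction I would encode a configuration of $M$ by $\bigO(s)$ bandits with $\bigO(1)$ states each: one bandit per tape cell, whose state stores that cell's symbol, a ``head here?'' flag, and (when the head is there) the control state of $M$; plus a block of $\bigO(s)$ further bandits acting as a binary counter that both serves as a \emph{clock} sequencing the individual single-cell rewrites making up one step of $M$ and caps the run at $2^{\bigO(s)}$ ticks, which suffices since $M$ halts within that many steps. Because exactly one bandit is activated per time step, while a single step of $M$ touches several adjacent cells and the control state, the heart of the reduction is a set of gadget states together with rational costs $c$ and $d$ engineered so that the controller pays a steep penalty (equivalently, forfeits a large reward) the moment it deviates from the unique ``canonical'' activation schedule that faithfully simulates $M$; along that schedule the controller picks up a fixed reward precisely when an accepting configuration appears, so the threshold $\beta$ is met iff $M$ accepts. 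Two points need care: the passive transition matrix $Q$ must be non-trivial --- otherwise the instance is an ordinary multi-armed bandit, solvable in polynomial time by Gittins indices --- so I would make an un-activated bandit drift toward an absorbing ``error'' state, which both kills the Gittins shortcut and enforces a per-cell deadline that pins the schedule down; and, to use the $\limsup$-average criterion rather than a finite horizon, the simulation should loop, resetting to the encoding of $M$'s start configuration whenever it halts (or errs), so that the long-run average cost records whether acceptance is ever attained.

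The step I expect to be the main obstacle is the \emph{forcing} argument: proving that \emph{no} activation policy, however devious, can achieve average cost below $\beta$ unless it actually performs a correct simulation --- one must show that every local cheat (activating the wrong bandit, acting too early or too late, corrupting the counter, fabricating an accept) is strictly punished by $c,d$, and that these punishments cannot be amortized away against savings elsewhere once the $\limsup$ is taken. Making this go through with \emph{deterministic} $0$--$1$ transition matrices, only two actions, and $0/1$ costs --- the sharp form of the theorem --- is what makes the gadget design delicate, since one cannot lean on randomization; and if one additionally wants the inapproximability claim mentioned in the main text, the reward gap between the accepting and non-accepting cases must be amplified (e.g.\ by rescaling costs or stretching the reward-collection phase) and re-checked against the averaging normalization. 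The remaining pieces --- polynomiality of the construction and the ``if'' direction (that a faithful simulation does meet $\beta$) --- are routine bookkeeping once the gadgets and penalties are fixed.
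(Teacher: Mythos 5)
Your overall strategy is the right one, and it is essentially the strategy of the source the paper cites: note that the paper itself offers no proof of this theorem --- it states it as ``Theorem 4 in \cite{Papadimitriou1999}'' and relies entirely on that reference, which indeed reduces from the acceptance problem for a polynomial-space-bounded deterministic Turing machine by encoding the tape in the joint state of polynomially many constant-size bandits with deterministic transitions, and which, as the paper remarks afterwards, actually shows that deciding whether the optimal reward is nonzero is PSPACE-hard (hence the inapproximability). So your choice of source problem, your cell-per-bandit encoding, your use of absorbing error states under the passive dynamics, and your looping construction for the $\limsup$-average criterion are all faithful to that blueprint.

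The genuine gap is that the one step carrying all of the content --- the forcing (soundness) argument --- is announced but not carried out. You write that the gadget states and the costs $c,d$ are ``engineered so that'' any deviation from the canonical activation schedule is strictly punished, and you correctly identify this as the main obstacle, but you never exhibit the gadgets, the local transition rules, or the cost assignment, nor do you prove that a cheating policy cannot amortize its penalties away under the $\limsup$-average normalization. Without this, the reduction is a description of what a proof would have to do rather than a proof: the completeness direction (a faithful simulation meets the threshold $\beta$) is, as you say, routine, but the hardness of the theorem lives entirely in showing that \emph{every} policy failing to simulate $M$ misses $\beta$. The standard way to close this, consistent with your own remark about absorbing error states, is to make any detectable deviation drive some bandit into a permanently costly absorbing state, so that a single error forces the long-run average cost above $\beta$ and no amortization argument is needed; but that mechanism has to be instantiated concretely (which deviations are detectable locally, given that only the activated bandit's transition depends on the action) and verified against the single-activation-per-step constraint and the exponential-length reset cycle. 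Until that is done, the proposal should be regarded as an accurate outline of the argument in \cite{Papadimitriou1999} rather than an independent proof.
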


Actually, the proof of Theorem 4 in \cite{Papadimitriou1999} shows that deciding if the optimal reward is non-zero is also PSPACE-hard, hence ruling out any 
algorithm with non-trivial approximation ratio.
Furthermore, the result holds 
\cite{Papadimitriou1999}, even if matrices $P$, $Q$ correspond to one deterministic transition rule for all bandits that are not played and another deterministic transition rule applying to the bandit that is played.

In the general case, \cite{bertsimas2000restless} introduce a hierarchy of $N$ (where $N$ is the number of bandits) increasingly stronger linear programming relaxations, the last of which is exact and corresponds to the (exponential size) formulation of the problem as a Markov decision chain, while the other relaxations provide bounds and are efficiently computed. They  also propose a priority-index heuristic scheduling policy from the solution to the first-order relaxation, where the indices are defined in terms of optimal dual variables. 
Similarly, \cite{ortner2012regret} present strong guarantees, but for an algorithm that is not tractable.

Under stricter assumptions, better guarantees are possible. Under assumptions on the rate of change,
 \cite{besbes2014stochastic} present a framework for reasoning about regret of policies computable in polynomial time. Other assumptions \cite{guha2010approximation,jung2019thompson} also yield guarantees on the approximation ratio. 
Specifically, \cite{jung2019thompson} consider guarantees in the case of the rewards being binary, which is indeed the case in COVID-19 screening.

\begin{algorithm}[t]
	\begin{algorithmic}[1]
	    \STATE \textbf{Input} prior $\prior$, episode length $\eplen$, policy mapping $\mu$
	    \STATE \textbf{Initialize} posterior $\posterior_1 = \prior$, history $\history = \emptyset$
		\FOR {episodes $l = 1, \cdots, \epnum$}
		\STATE Draw a parameter $\bandit_l \sim \posterior_l$ and compute the policy $\pi_l = \mu(\bandit_l)$
		\STATE Set $\history_0 = \emptyset$
		\FOR {$t = 1, \cdots, \eplen$}
		\STATE Select $N$ communities to test in $A_t = \pi_l(t, \history_{t-1})$
		\STATE Evaluate tests to obtain rewards $X_{t, A_t}$ 
		\STATE Update $\history_t$
		\ENDFOR
		\STATE Append $\history_L$ to $\history$ and update posterior distribution $\posterior_{l+1}$ using $\history$
		\ENDFOR
	\end{algorithmic}
	\caption{Thompson sampling for COVID-19 screening, based on \cite{jung2019thompson}}
	\label{alg:tsrmab}
\end{algorithm}

\subsection{The Guarantees}
\label{appb}

Our guarantees are relative to a broad class of benchmark policies including the optimal fixed policy, the myopic policy, or the index-based policy, all of which are:

\begin{definition}[\cite{gittins2011multi,jung2019thompson}]
\label{def:deterministicPolicy}
A deterministic policy $\pi$ takes time index and history $(t, \history_{t-1})$ as an input and outputs a fixed action $A_t = \pi(t, \history_{t-1})$. 
A deterministic policy mapping $\mu$ takes a system parameter $\bandit$ as an input and outputs a deterministic policy $\pi = \mu(\bandit)$. 
\end{definition}

In particular, we bound:

\begin{definition}[\cite{gittins2011multi}]
Regret is:
\begin{equation}
\label{eq:regret}
	R(T;\trueBandit) 
	= \epnum V^{\trueBandit}_{\truePolicy, 1}(\emptyset) - \E_{\trueBandit}\sumtt A_{t} \cdot X_{t}. 
\end{equation}
where the \textit{value function} is:
\begin{equation}
\label{eq:value}
	V^{\bandit}_{\pi, i}(\history) 
	= \E_{\bandit, \pi} [\sum_{j=i}^{\eplen}A_{j} \cdot X_{j} | \history].
\end{equation}
\end{definition}

A variant of the regret, where one assumes we have access to a prior distribution $\prior$ over the set of system parameters $\bandits$:

\begin{definition}[\cite{jung2019thompson}]
Bayesian regret is
\begin{equation*}
	BR(T) = \E_{\trueBandit \sim \prior} R(T;\trueBandit),
\end{equation*}
\end{definition}

The bound is as follows:

\begin{theorem}{\bf(Bayesian regret bound of  Thompson sampling)}
\label{thm:main}
The Bayesian regret of Algorithm \ref{alg:tsrmab} satisfies the following bound
\begin{equation*}
    BR(T) 
    = \bigO (\sqrt{K\eplen^3 N^{3}T \log T})
    = \bigO (\sqrt{mK\eplen^4 N^{3} \log (m\eplen)}).
\end{equation*}
%Furthermore, suppose that the prior $\prior$, is discrete and puts a non-zero mass on the parameter $\trueBandit$ and that the competitor policy is the optimal policy. 
\end{theorem}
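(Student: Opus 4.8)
The plan is to adapt the posterior-sampling regret analysis for episodic restless bandits of \cite{jung2019thompson} (which in turn rests on the confidence-set machinery of \cite{ortner2012regret} and earlier dynamic-episode posterior-sampling arguments for reinforcement learning). It has three movements: a reduction of the Bayesian regret to an ``on-policy'' value discrepancy, a per-episode bound on that discrepancy through a simulation lemma together with count-based confidence sets, and a potential-function bound on the accumulated estimation error. First I would invoke the \emph{posterior-sampling identity}: conditioned on the history $\history$ available at the start of episode $l$, the drawn parameter $\bandit_l \sim \posterior_l$ and the unknown $\trueBandit$ are identically distributed, so $\E[g(\trueBandit)\mid\history]=\E[g(\bandit_l)\mid\history]$ for every $\history$-measurable $g$. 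Applying this to $g(\bandit)=V^{\bandit}_{\mu(\bandit),1}(\emptyset)$ and using the tower rule gives $\epnum\,\E_{\trueBandit\sim\prior}V^{\trueBandit}_{\truePolicy,1}(\emptyset)=\sum_{l=1}^{\epnum}\E\,V^{\bandit_l}_{\pi_l,1}(\emptyset)$ with $\pi_l=\mu(\bandit_l)$; and since the reward actually collected in episode $l$ has conditional mean $V^{\trueBandit}_{\pi_l,1}(\emptyset)$ given $(\pi_l,\trueBandit)$, the Bayesian regret telescopes to
\begin{equation*}
\br(T)=\sum_{l=1}^{\epnum}\E\big[\,V^{\bandit_l}_{\pi_l,1}(\emptyset)-V^{\trueBandit}_{\pi_l,1}(\emptyset)\,\big],
\end{equation*}
the gap, for the \emph{same} policy $\pi_l$, between the value the algorithm imagines it attains and the value it actually attains.

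Next, for each episode I would expand this value gap by a simulation (performance-difference) lemma along the $\eplen$ steps of the episode: it becomes a telescoping sum in which step $t$ contributes at most the range of the value function --- which is $\bigO(N\eplen)$, since at most $N$ unit rewards are collected per step --- times a sum of per-community deviations between the belief transitions induced by $\bandit_l$ and by $\trueBandit$ at the state visited at step $t$. The feature specific to \emph{restless} bandits is that the belief about a community not tested for $j\le\eplen$ steps is a deterministic function of its last observed outcome and of $j$, so each community contributes only $\bigO(\eplen)$ belief contexts and $\bigO(1)$ (binary) transition rows; overall the number of estimation cells is $\bigO(K)$. I would then build count-based confidence sets $\mathcal{C}_l$ around the empirical per-community kernels with $\prob(\trueBandit\in\mathcal{C}_l)\ge 1-\bigO(1/T)$ --- hence, by the posterior-sampling identity, also $\prob(\bandit_l\in\mathcal{C}_l)\ge 1-\bigO(1/T)$ --- so that on the good event the step-$t$ deviation at a community tested $n$ times is $\bigO(\sqrt{\log T/n})$, the failure event contributing only a lower-order term.

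It remains to bound the accumulated error on the good event. Each step tests $N$ communities, so there are $NT$ observations in total, distributed over the $\bigO(K)$ estimation cells, and a Cauchy--Schwarz / elliptical-potential argument --- carried out carefully, including the inflation of a $j$-step belief transition relative to the one-step kernel error --- bounds $\sum 1/\sqrt{n}$; collecting the $\bigO(N\eplen)$ value range, the $\eplen$-factors contributed by the belief horizon, and the $\sqrt{\log T}$ confidence radius, and substituting $T=\epnum\eplen$ and $\log T=\log(\epnum\eplen)$, yields
\begin{equation*}
\br(T)=\bigO(\sqrt{K\eplen^{3}N^{3}T\log T})=\bigO(\sqrt{\epnum K\eplen^{4}N^{3}\log(\epnum\eplen)}),
\end{equation*}
which is the claim.

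The main obstacle is the second movement. The raw state space of the system has size exponential in $K$ (it is $\approx 2^{K}$ for binary communities), so no off-the-shelf episodic-MDP regret bound applies, and the entire difficulty is to keep every quantity polynomial: one must (i) justify the finite belief-state representation with the $\bigO(\eplen)$-per-community and $\bigO(K)$-cell bounds above, so that the confidence sets have polynomial dimension; (ii) prove the simulation lemma in terms of \emph{per-community} kernel errors and control the multi-step belief-transition error tightly --- a crude ``$j$ times the one-step error'' estimate already loses too much --- so that the information obtained from the $N$ communities observed each step is credited correctly, which is what produces the stated polynomial dependence; and (iii) verify the measurability hypotheses behind the posterior-sampling identity for episodes of fixed length. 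The remaining ingredients --- $\ell_1$/binomial concentration for the empirical kernels, union bounds over $T$, and the potential-function bookkeeping --- are routine.
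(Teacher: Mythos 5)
Your proposal is correct and follows essentially the same route as the argument the paper relies on: the paper's own proof of the stated bound is a one-line invocation of Theorem~1 of \cite{jung2019thompson}, and your three movements --- the posterior-sampling identity plus episodic regret decomposition, the per-episode Bellman/simulation decomposition with Hoeffding confidence sets over the per-community $(r,n)$ belief cells, and the Cauchy--Schwarz potential bound with $\delta = 1/T$ --- reconstruct that cited proof faithfully. The only slip is bookkeeping: the number of estimation cells is $3K\eplen$ (each community contributes $\bigO(\eplen)$ belief contexts, as you yourself note), not $\bigO(K)$, and it is precisely this extra factor of $\eplen$, combined with the $\bigO(\eplen N)$ value range, that produces the $\eplen^{3}$ under the square root.
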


\begin{proof}
The proof is by straightforward application of Theorem 1 of \cite{jung2019thompson}.
\end{proof}

It is known \cite{jung2019thompson} that this bound is tight for $\eplen = 1, N=1$.
%\citep[Chp. 36]{lattimore2018bandit}, up to a logarithmic factor.

\end{appendix}
\end{document}